\newtheorem{theorem}{Theorem}
\newtheorem{lemma}[theorem]{Lemma}
\newtheorem{construction}[theorem]{Construction}
\newtheorem{corollary}[theorem]{Corollary}
\newenvironment{proof}
{ \noindent {\sc Proof.\/}  }
{\null  \hfill $\Box$ \par\medskip \vspace{1em}}
\newcommand {\pb}[3] {
\medskip
\noindent
\fbox{
\parbox[c]{11.5cm}{
{\bf #1}

\noindent
{\bf Input:} #2

\noindent
{\bf Output:} #3

\medskip
}
}

\medskip
}
\def \A {\mathcal{A}}
\def \P {\mathcal{P}}
\def \B {\mathcal{B}}
\def \C {\mathcal{C}}
\def \D {\mathcal{D}}
\def \T {\mathfrak{T}}
\def \X {\mathcal{X}}
\def \pos {\mathrm{Pos}}
\begin{document}
\label{firstpage}

  \title{The Emptiness Problem for Tree Automata with at
    Least One  Disequality Constraint is NP-hard} 
\author{P.-C. Héam\footnote{FEMTO-ST - INRIA - CNRS - Université de
    Franche-Comté}\and V. Hugot\footnote{LIFL - INRIA } \and  O.
  Kouchnarenko\footnote{FEMTO-ST - INRIA - CNRS -  Université de Franche-Comté}}

\maketitle

\begin{abstract}
The model of tree automata with equality and disequality constraints was
introduced in 2007 by Filiot, Talbot and Tison. In this paper we show that if
there is at least one disequality constraint, the emptiness problem is NP-hard.
\end{abstract}

\section{Introduction}

Tree automata are a pervasive tool of contemporary Computer Science, with
applications running the gamut from XML processing~\cite{xml} to program
verification~\cite{DBLP:conf/rta/BoichutGJR07,DBLP:journals/iandc/JacquemardKV11,DBLP:conf/wia/HeamHK12}. 
Since their original introduction, they have spawned an
ever-growing family of variants, each with its own characteristics of
expressiveness and decision complexity. Among them is the family of tree
automata with {\em equality and disequality constraints}, providing several means for comparing subtrees.
Examples of such automata are the original class introduced in \cite{DBLP:conf/fct/DauchetM79}, 
their restriction to constraints between {\em brothers}~\cite{DBLP:conf/stacs/BogaertT92}, 
and {\em visibly tree automata} with memory and
constraints~\cite{DBLP:journals/lmcs/Comon-LundhJP08}. 
In this paper we
focus on a recently introduced variant: tree automata with
\emph{global} equality and disequality
constraints~\cite{DBLP:conf/csl/FiliotTT07,DBLP:conf/dlt/FiliotTT08,DBLP:journals/ijfcs/FiliotTT10}. For this class of
automata, the universality problem is
undecidable~\cite{DBLP:journals/ijfcs/FiliotTT10}, while  membership
is NP-complete~\cite{DBLP:journals/ijfcs/FiliotTT10}, and 
emptiness is decidable~\cite{DBLP:conf/lics/BargunoCGJV10}. Several
complexity results for subclasses were pointed out in the literature: the
membership problem is polynomial for rigid tree
automata~\cite{DBLP:journals/iandc/JacquemardKV11} as well as for tree
automata with a fixed number of equality
constraints~\cite{DBLP:conf/wia/HeamHK12} and no disequality constraints. The emptiness problem is
EXPTIME-complete if there are only equality
constraints~\cite{DBLP:journals/ijfcs/FiliotTT10}, in NEXPTIME 
if there
are only irreflexive disequality
constraints~\cite{DBLP:journals/ijfcs/FiliotTT10}, and in 3-EXPTIME if there
are only reflexive disequality constraints~\cite{DBLP:journals/jar/CreusGG13}.
In this paper we show that the emptiness problem is NP-hard for tree
automata with global equality and disequality constraints if there is at
least one disequality constraint. 


\section{Formal Background}

A {\it ranked alphabet} is a finite set $\mathcal{F}$ of symbols equipped with an
arity function $\mathsf{arity}$ from $\mathcal{F}$ into $\mathbb{N}$. The set of
{\it terms} on $\mathcal{F}$, denoted $\T(\mathcal{F})$ is inductively defined as the
smallest set satisfying: for every $t\in \mathcal{F}$ such that
$\mathsf{arity}(t)=0$, $t\in \T(\mathcal{F})$; if $t_1,\ldots,t_n$ are in
$\T(\mathcal{F})$ and if $f\in\mathcal{F}$ has arity $n$, then
$f(t_1,\ldots,t_n)\in\T(\mathcal{F})$. The set of {\it positions} of a term $t$,
denoted $\pos(t)$, is the subset of $\mathbb{N}^*$ (finite words over
$\mathbb{N}$) inductively defined by: if $\mathsf{arity}(t)=0$, then
$\pos(t)=\{\varepsilon\}$; if $t=f(t_1,\ldots,t_n)$, where $n$ is the arity
of $f$, then
$\pos(t)=\{\varepsilon\}\cup\{i\cdot \alpha_i\mid \alpha_i\in\pos(t_i)\}$. A
term $t$ induces a function (also denoted $t$) from $\pos(t)$ into $\mathcal{F}$,
where $t(\alpha)$ is the symbol of $\mathcal{F}$ occurring in $t$ at the position
$\alpha$. The subterm of a term $t$ at position $\alpha\in \pos(t)$ is the
term $t_{|\alpha}$ such that
$\pos(t_{|\alpha})=\{\beta\mid \alpha\cdot\beta\in \pos(t)\}$ and for all
$\beta\in \pos(t_{\alpha})$, $t_{|\alpha}(\beta)=t(\alpha\cdot\beta)$. For
any pair of terms $t$ and $t^\prime$, any $\alpha\in\pos(t)$, the term
$t[t^\prime]_{\alpha}$ is the term obtained by substituting in $t$ the
subterm rooted at position $\alpha$ by $t^\prime$. Let $\X$ be an
infinite countable set of variables such that $\X\cap\mathcal{F}=\emptyset$. A
{\it context} $C$ is term in $\T(\mathcal{F}\cup\X)$ (variables are constants) where
each variable occurs at most once; it is denoted $C[X_1,\ldots,X_n]$ if the
occurring variables are $X_1,\ldots,X_n$. If $t_1,\ldots,t_n$ are in
$\T(\mathcal{F})$, $C[t_1,\ldots,t_n]$ is the term obtained from $C$ by
substituting each $X_i$ by $t_i$.

A {\it tree automaton} on a ranked alphabet $\mathcal{F}$ is a tuple
$\A=(Q,\Delta,F)$, where $Q$ is a finite set of states, $F\subseteq Q$ is
the set of final sets and $\Delta$ is a finite set of rules of the form
$f(q_1,\ldots,q_n)\to q$, where $f\in \mathcal{F}$ has arity $n$ and the $q_i$'s
and $q$ are in $Q$. A tree automaton $\A=(Q,\Delta,F)$ induces a relation on
$\T(\mathcal{F}\cup Q)$ (where elements of $Q$ are constant, denoted $\to_\A$ or
just $\to$, defined by $t\to_\A t^\prime$ if there exists a transition
$f(q_1,\ldots,q_n)\to q\in \Delta$ and $\alpha\in \pos(t)$ such that
$t^\prime=t[q]_\alpha$, $t(\alpha)=f$ and for every $1\leq i\leq n$,
$t(\alpha\cdot i)=q_i$. The reflexive transitive closure of $\to_\A$ is
denoted $\to_\A^*$. A term $t\in \T(\mathcal{F})$ is {\it accepted} by $\A$ if
there exists $q\in F$, such that $t\to_\A^* q$. An {\it run} $\rho$ for a
term $t\in \T(\mathcal{F})$ in $\A$ is a function from $\pos(t)$ into $Q$ such
that if $\alpha\in \pos(t)$ and $t(\alpha)$ has arity $n$, then
$t(\alpha)(\rho(\alpha\cdot 1),\ldots, \rho(\alpha\cdot n))\to \rho(\alpha)$
is in $\Delta$. An {\it accepting run} is a run satisfying
$\rho(\varepsilon)\in F$. It can be checked that a term $t$ is accepted by
$\A$ iff there exists an accepting run $\rho$ for $t$ and, more generally,
that $t\to_A^* q$ if there exists a run $\rho$ for $t$ in $\A$ such that
$\rho(\varepsilon)=q$. In this case we write $t\to_{\rho,\A}^* q$ or just
$t\to_{\rho}^* q$ if $\A$ is clear from the context.

A tree automaton with global equality and disequality constraints (TAGED for
short) is a tuple $(\A,R_1,R_2)$, where $\A=(Q,\Delta,F)$ is a tree
automaton and $R_1,R_2$ are binary relations over $Q$. The relation $R_1$ is
called the set of equality constraints and the relation $R_2$ the set of
disequality constraints. A term $t$ is accepted by $(\A,R_1,R_2)$ if there
exists a successful run $\rho$ for $t$ in $\A$ such that: if
$(\rho(\alpha),\rho(\beta))\in R_1$, then $t_{|\alpha}=t_{|\beta}$, and if
$(\rho(\alpha),\rho(\beta))\in R_2$, then $t_{|\alpha}\neq t_{|\beta}$.
For a ranked alphabet $\mathcal{F}$, 
let TAGED($k^\prime$,$k$) denote the class $(\A,R_1,R_2)$ of TAGED, where
$\A$ is a tree automaton over $\mathcal{F}$, $|R_1|\leq k^\prime$ and
 $|R_2|\leq k$.

\section{TAGED and the Hamiltonian Path Problem}
The paper focuses on proving the following theorem.

\begin{theorem}\label{thm:main}
The emptiness problem for TAGED(0,1) is NP-hard.
\end{theorem}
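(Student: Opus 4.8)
The plan is to reduce from the Hamiltonian Path Problem, which is NP-complete, to the emptiness problem for $\mathrm{TAGED}(0,1)$. Given a graph $G=(V,E)$ with $V=\{1,\dots,n\}$, I would build in polynomial time a $\mathrm{TAGED}(0,1)$ of the form $(\A_G,\emptyset,R_2)$ whose language is nonempty iff $G$ admits a Hamiltonian path. Accepted terms will encode candidate orderings $v_1,\dots,v_n$ of the vertices: I use one constant $c_v$ per vertex $v$ and lay the ordering out along a right comb of height $n$, so that the shape of the term already forces the sequence to have length exactly $n$. The single disequality is the reflexive pair $R_2=\{(q,q)\}$ on one distinguished state $q$, and its effect is global: any two positions labelled $q$ must carry distinct subterms. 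If I arrange that the $n$ positions encoding $v_1,\dots,v_n$ are exactly the positions labelled $q$, and that the subterm at such a position determines the vertex, then $(q,q)\in R_2$ says precisely that $v_1,\dots,v_n$ are pairwise distinct, i.e.\ that the sequence is a permutation of $V$.

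It then remains to have the underlying automaton $\A_G$ enforce that consecutive vertices are adjacent, $(v_i,v_{i+1})\in E$. Since the comb is processed bottom-up, each node joins the vertex at its current level with the suffix $v_{i+1},\dots,v_n$; I would thread the identity of the next vertex through the state reaching that node, so that the transition at level $i$ fires only when $(v_i,v_{i+1})$ is an edge of $G$. With $O(n)$ control states (tracking the relevant ``current'' vertex) this is a purely regular check, and together with the permutation condition supplied by $R_2$ it should pin down exactly the Hamiltonian orderings. The two directions of the reduction are then routine: a Hamiltonian path yields an accepting run with pairwise distinct subterms at the $q$-positions, and conversely any term accepted with $R_2$ satisfied encodes a length-$n$ walk whose vertices are all distinct, hence a Hamiltonian path.

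The crux, and the step I expect to fight hardest with, is reconciling these two demands on the \emph{same} vertex positions: the adjacency transitions need to \emph{read} which vertex occupies level $i$, whereas the single reflexive disequality forces every vertex position to carry the \emph{one} comparison state $q$, which by itself reveals nothing about the vertex. With only a disequality available and no equality constraint to tie an ``adjacency copy'' of a vertex to a ``comparison copy'', I must avoid the naive fix of duplicating each vertex: without an equality constraint the two copies decouple, and the automaton can no longer guarantee that the distinct subterms are the vertices actually used by the walk, so that emptiness would collapse to the trivial existence of a length-$n$ walk rather than of a Hamiltonian path. The resolution I would pursue is to let the \emph{symbol} at each comb node, rather than the state of a child, carry the vertex identity; transition rules are symbol-indexed, so a node whose symbol names $v_i$ can simultaneously perform the adjacency test against the incoming state and, through that same rule, fix the vertex-encoding subterm hung beneath it, while only a uniform top wrapper of that subterm carries $q$. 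Making this pinning coexist with the uniformity of $q$ -- so that ``all $q$-subterms distinct'' faithfully translates back to ``all $v_i$ distinct'' with no equality constraint -- is the delicate technical point on which the whole construction turns; once it is in place, polynomiality of the reduction is immediate.
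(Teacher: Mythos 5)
Your reduction founders on exactly the point you flag as the crux, and the resolution you sketch does not close the gap. In a bottom-up run, the transition applied at a comb node sees only the \emph{state} reached by the subterm hung beneath it, never the subterm itself. Because the single reflexive disequality $(q,q)$ must compare all $n$ vertex-encoding subterms, they must all carry the \emph{same} state $q$; hence $q$ must be reachable from the encoding of \emph{every} vertex, and a rule at a node whose symbol names $v_i$ --- however you index the symbols --- can only certify that the hung subterm lies in the regular language $L_q$ of terms reaching $q$, not that it encodes $v_i$ specifically. A term pairing a legal length-$(n-1)$ walk $v_1\ldots v_n$ (read off the comb symbols) with an arbitrary tuple of pairwise distinct elements of $L_q$ hung beneath is then accepted, and your criterion collapses to the existence of any length-$(n-1)$ walk --- precisely the failure mode you yourself describe for the ``duplicate each vertex'' fix; moving the identity into the symbol does not escape it, since the decoupling happens at the state interface, not at the symbol. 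In general, with no equality constraints and one reflexive disequality on one state, the only leverage the constraint gives is cardinality: it can force $m$ pairwise distinct members of $L_q$, i.e.\ test whether $|L_q|\geq m$, and nothing more.

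The paper exploits exactly this limitation instead of fighting it. It computes in polynomial time the number $m_G$ of paths of length $|V|-1$ in $G$ (Lemma~\ref{lemma:mG}), builds an automaton $\B_G$ with a unique final state $q_1$ accepting exactly the encodings of the \emph{non-Hamiltonian} such paths (Lemma~\ref{lm:bg}), and an auxiliary automaton whose unique accepted term has exactly $m_G$ leaves, each to be filled by a term reaching $q_1$. The disequality $(q_1,q_1)$ then makes the combined language non-empty iff $|L(\B_G)|\geq m_G$, i.e.\ iff every path of length $|V|-1$ is non-Hamiltonian; so the language is empty iff $G$ has a Hamiltonian path (Lemma~\ref{lm:fin}). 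Note also the direction: this is a many-one reduction mapping yes-instances of Hamiltonian Path to \emph{empty} languages, whereas your construction, even if repaired, maps them to non-empty languages and would therefore establish hardness of non-emptiness rather than of emptiness as stated. To salvage your approach you would need either an equality constraint to tie the comparison copy of each vertex to the adjacency copy, or a counting device of the paper's kind.
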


The proof of Theorem~\ref{thm:main} is a reduction to the  Hamiltonian Path
 Problem defined below.

\pb{Hamiltonian Graph Problem}{a directed finite graph $G=(V,E)$;}{$1$ if
  there exists a path in $G$ visiting each element of $V$ exactly once, $0$
  otherwise.}

The Hamiltonian Graph Problem is known to be NP-complete~\cite{garey}.
A path in a directed graph visiting  each vertex exactly once is called 
 a {\it Hamiltonian path}.
Before proving  Theorem~\ref{thm:main}, let us mention the following  direct important
consequence, which is the main result of the paper.

\begin{corollary}
For every fixed $k\geq 1$, and every fixed $k^\prime\geq 0$,
the emptiness problem for TAGED($k^\prime$,$k$) is NP-hard.
\end{corollary}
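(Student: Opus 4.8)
The plan is to derive the corollary from Theorem~\ref{thm:main} by a trivial reduction resting on a class inclusion, so that essentially no new automaton construction is required. The guiding observation is that TAGED(0,1) is already a subclass of TAGED($k'$,$k$) as soon as $k'\geq 0$ and $k\geq 1$, and that the emptiness question does not depend on the cardinality bounds $k',k$ themselves but only on the underlying tuple.

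First I would unfold the definitions. By definition a TAGED(0,1) is a tuple $(\A,R_1,R_2)$ with $|R_1|\leq 0$ and $|R_2|\leq 1$, i.e.\ with $R_1=\emptyset$ and $R_2$ containing at most one pair. For any fixed $k'\geq 0$ and $k\geq 1$ such a tuple automatically satisfies $|R_1|\leq k'$ and $|R_2|\leq k$, since $0\leq k'$ and $1\leq k$; hence every TAGED(0,1) is also a TAGED($k'$,$k$), giving the class inclusion TAGED(0,1)\,$\subseteq$\,TAGED($k'$,$k$). Second I would note that the acceptance condition for a TAGED, as defined above, refers only to the tuple $(\A,R_1,R_2)$ and its run constraints, and is insensitive to the bounds $k',k$. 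Consequently a fixed tuple accepts exactly the same language whether it is read as a TAGED(0,1) or as a TAGED($k'$,$k$); in particular it is empty in the one reading if and only if it is empty in the other.

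Combining these two points, the map sending each TAGED(0,1) instance to itself, now viewed as a TAGED($k'$,$k$) instance, is an identity (hence polynomial-time, indeed logarithmic-space) reduction from the emptiness problem for TAGED(0,1) to the emptiness problem for TAGED($k'$,$k$). Since the former is NP-hard by Theorem~\ref{thm:main}, the latter is NP-hard as well, which is the assertion of the corollary.

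I do not expect a genuine obstacle here: the only thing to verify is the purely syntactic fact that the cardinality bounds of TAGED($k'$,$k$) subsume those of TAGED(0,1). The sole place where the hypotheses are used is this verification, and it is exactly the assumption $k\geq 1$ that makes it go through — one disequality constraint must be permitted for a TAGED(0,1) to qualify as a TAGED($k'$,$k$); the condition $k'\geq 0$ is automatic. Were $k=0$ allowed, no disequality constraint could be accommodated and the reduction would fail, which is consistent with the fact that the hardness genuinely stems from the presence of at least one disequality constraint.
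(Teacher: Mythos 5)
Your argument is correct and is exactly the intended one: the paper presents the corollary as a direct consequence of Theorem~\ref{thm:main} without writing out a proof, relying precisely on the class inclusion TAGED(0,1) $\subseteq$ TAGED($k^\prime$,$k$) for $k^\prime\geq 0$, $k\geq 1$, and the fact that emptiness is a property of the tuple itself. Your spelled-out identity reduction fills in exactly what the paper leaves implicit.
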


We have divided the proof of Theorem~\ref{thm:main} into a sequence of
lemmas. Lemma~\ref{lm:1}, below, is immediately obtained by a cardinality argument. 

\begin{lemma}\label{lm:1}
In a directed graph $G$ with $n$ vertices, there exists a Hamiltonian
path iff there is a path of length $n-1$ that does not visit the same
vertex  twice.
\end{lemma}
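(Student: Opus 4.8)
The plan is to prove both implications by a direct count of the vertices traversed along a path, using the standard fact that a path of length $\ell$ (i.e. with $\ell$ edges) visits exactly $\ell+1$ vertices when these are counted with multiplicity. I would first fix notation: write the path as a sequence $v_0, v_1, \ldots, v_\ell$ of vertices such that $(v_{i-1},v_i)\in E$ for each $1\leq i\leq \ell$, so that ``length'' refers to the number $\ell$ of edges, and ``not visiting the same vertex twice'' means the $v_i$ are pairwise distinct.

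For the forward direction, I would start from a Hamiltonian path. By definition it visits each of the $n$ vertices of $V$ exactly once, so the underlying vertex sequence consists of $n$ pairwise distinct vertices and hence has the form $v_0,\ldots,v_{n-1}$; this immediately gives a path of length $n-1$ that does not repeat any vertex. This direction requires essentially no argument beyond unwinding the definitions.

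For the converse, I would take a path $v_0,\ldots,v_{n-1}$ of length $n-1$ in which no vertex is visited twice. The no-repetition hypothesis means the $n$ vertices $v_0,\ldots,v_{n-1}$ are pairwise distinct, so the set $\{v_0,\ldots,v_{n-1}\}$ has cardinality exactly $n$. Since $G$ has only $n$ vertices, this set must equal $V$, whence the path visits every vertex, and by distinctness it visits each one exactly once; it is therefore a Hamiltonian path. The sole point to state carefully is this cardinality comparison, $|\{v_0,\ldots,v_{n-1}\}| = n = |V|$ forcing equality of the two sets, which is where the bound on the number of vertices is used.

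I do not anticipate any real obstacle here: the statement is a reformulation of the definition of a Hamiltonian path in terms of length plus simplicity, and the only ``work'' is the pigeonhole-style observation that $n$ distinct vertices in an $n$-vertex graph exhaust $V$. The proof is thus short, and the main care needed is merely to keep the edge-count versus vertex-count bookkeeping consistent with the convention that a length-$(n-1)$ path carries $n$ vertices.
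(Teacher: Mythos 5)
Your proof is correct and matches the paper's intent exactly: the paper gives no written proof, stating only that the lemma ``is immediately obtained by a cardinality argument,'' and your pigeonhole observation that $n$ pairwise distinct vertices exhaust the $n$-vertex set $V$ is precisely that argument, with the edge-versus-vertex bookkeeping handled carefully.
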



For any directed graph $G=(V,E)$, let $m_{G}$ denote the number of paths
of length $|V|-1$ in $G$.

\begin{lemma}\label{lemma:mG}
Let $G=(V,E)$ be a directed graph. One can compute $m_G$ in polynomial time
in the size of $G$.
\end{lemma}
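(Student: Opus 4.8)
The plan is to reduce the computation of $m_G$ to a matrix-power computation. First I would observe that, in the sense relevant here (where by Lemma~\ref{lm:1} a path need not visit distinct vertices), a path of length $|V|-1$ in $G$ is exactly a walk using $|V|-1$ edges, that is, a sequence $v_0, v_1, \ldots, v_{n-1}$ of vertices with $n = |V|$ such that $(v_i, v_{i+1}) \in E$ for every $i$. Let $A$ be the adjacency matrix of $G$, indexed by $V$, with $A_{u,v} = 1$ if $(u,v) \in E$ and $A_{u,v} = 0$ otherwise. The standard fact, provable by a routine induction on $\ell$, is that $(A^\ell)_{u,v}$ is the number of walks of length $\ell$ from $u$ to $v$ in $G$. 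Summing over all choices of endpoints then yields the identity $m_G = \sum_{u,v \in V} (A^{n-1})_{u,v}$.

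Next I would compute $A^{n-1}$ explicitly. This can be carried out with at most $n-2$ successive matrix multiplications (or with $O(\log n)$ multiplications by repeated squaring), followed by a single linear-time pass summing the entries of the resulting matrix. Each matrix multiplication involves $O(n^3)$ integer additions and multiplications, so the total number of arithmetic operations is manifestly polynomial in $n$, and hence in the size of $G$.

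The single point that requires care, and the main obstacle, is to verify that the integers manipulated during the computation never exceed polynomial bit-length, so that each individual arithmetic operation is itself performed in polynomial time. For this I would use the crude bound that every entry of $A^{\ell}$ is at most $n^{\ell}$: at each of the $\ell$ steps of a walk there are at most $n$ candidates for the next vertex. Consequently every entry of $A^{n-1}$ is bounded by $n^{n-1}$ and therefore has $O(n \log n)$ bits, and the same bound dominates all intermediate entries arising during the repeated multiplications. Adding and multiplying integers of this magnitude costs time polynomial in $n$, so combining this with the polynomial operation count shows that the whole procedure runs in time polynomial in the size of $G$, which proves the lemma.
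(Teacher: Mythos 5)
Your proof is correct and is essentially the same argument as the paper's: the paper computes the counts $m_{G,k,u,v}$ of length-$k$ walks from $u$ to $v$ via the recurrence $m_{G,k+1,u,v}=\sum_{(u,u^\prime)\in E}m_{G,k,u^\prime,v}$ and sums over endpoints, which is exactly your iterated multiplication by the adjacency matrix followed by summing the entries. Your explicit bound $n^{n-1}$ on the entries (hence polynomial bit-length) is a welcome extra detail that the paper leaves implicit.
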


\begin{proof}
Let us denote by $m_{G,k,u,v}$, for any $k\geq 1$, any $u\in V$ and any
$v\in V$, the number of paths of length $k$ from $u$ to $v$ in $G$. 
One has $m_{G,k+1,u,v}=\sum_{(u,u^\prime)\in E}m_{G,k,u^\prime,v}$.
Therefore, every $m_{G,k,u,v}$, for $k\leq |V|$, can be computed recursively
in polynomial time in $|V|$. Note that
$m_G=\sum_{u,v\in V}m_{G,|V|,u,v}$, concluding the proof. 
\end{proof}


Let $\mathcal{F}_1=\{f,g,A\}$, where $f$ has arity~$2$ and $g$ arity~$3$ and
$A$ is a constant. The next construction aims to build in polynomial 
time a tree automaton accepting a unique term having exactly $m$ leaves.


\begin{construction}
Let $m$ be a strictly positive integer and set
$\alpha_1\ldots\alpha_k$ the binary representation of $m$
($\alpha_1=1$ and $\alpha_i\in\{0,1\}$). Let $\A_m=(Q_1,\Delta_1,F_1)$
be the tree automaton over $\mathcal{F}_1$, where $Q_1=\{q_i \mid
0\leq i \leq k\}$, $F_1=\{q_k\}$ and $\Delta_1=\{A\to
q_1\}\cup\{f(q_{i},q_{i})\to q_{i+1}\mid 1\leq i \leq k-1\text{ and }
\alpha_{i+1}=0\}\cup\{g(q_{i},q_{i},q_1)\to q_{i+1}\mid 1\leq i \leq
k-1\text{ and } \alpha_{i+1}=1\}$.
\end{construction}

\begin{lemma}\label{lm:am}
The tree automaton $\A_m$ can be computed in polynomial time in $k$.
Moreover, $L(\A_m)$ is reduced to a single term having exactly $m$ leaves,
all labelled by $A$.
\end{lemma}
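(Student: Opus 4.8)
The plan is to analyze the automaton $\A_m$ and show two things: first, that it can be built in polynomial time, and second, that it accepts exactly one term, whose leaf count is $m$. The polynomial-time claim is the easier part. The binary representation $\alpha_1\ldots\alpha_k$ of $m$ has length $k = \lfloor \log_2 m\rfloor + 1$, and the automaton has $k+1$ states $q_0,\ldots,q_k$ together with at most $k$ rules (one rule per index $i$ from $1$ to $k-1$, plus the single leaf rule $A\to q_1$). Each rule is determined by reading off the bit $\alpha_{i+1}$, so the entire construction is a linear scan over the bits of $m$; I would simply observe that this is clearly polynomial (indeed linear) in $k$.

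For the structural claim, the key is a careful induction on the index $i$. I would prove the following invariant: for each $i$ with $1\leq i\leq k$, there is exactly one term $t_i$ satisfying $t_i \to_{\A_m}^* q_i$, and this $t_i$ has exactly $N_i$ leaves, where $N_i$ denotes the number whose binary representation is the prefix $\alpha_1\ldots\alpha_i$ (so $N_k = m$ and $N_1 = 1$). The base case $i=1$ is immediate: the only rule producing $q_1$ is $A\to q_1$, so $t_1 = A$, which has a single leaf, and $N_1 = \alpha_1 = 1$. For the inductive step, note that $q_{i+1}$ is produced by exactly one rule, determined by the bit $\alpha_{i+1}$. If $\alpha_{i+1}=0$, the only rule is $f(q_i,q_i)\to q_{i+1}$, forcing $t_{i+1}=f(t_i,t_i)$, so the leaf count doubles: $N_{i+1} = 2N_i$. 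If $\alpha_{i+1}=1$, the only rule is $g(q_i,q_i,q_1)\to q_{i+1}$, forcing $t_{i+1}=g(t_i,t_i,A)$, so $N_{i+1}=2N_i+1$. In both cases the recurrence on leaf counts exactly mirrors the recurrence $N_{i+1}=2N_i+\alpha_{i+1}$ that governs reading binary digits from most to least significant, which establishes $N_i$ is the value of the prefix $\alpha_1\ldots\alpha_i$.

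I expect the main obstacle, such as it is, to be arguing \emph{uniqueness} cleanly rather than merely existence. The crucial feature of the construction is that for each target state $q_{i+1}$ there is precisely one applicable rule, and its argument states are completely prescribed (either both $q_i$, or both $q_i$ and one $q_1$). This determinism is what forces the accepted language to be a singleton: any run ending in $q_k$ must pass through $q_i$ with the unique witness $t_i$ at the appropriate subterms, leaving no freedom in the term. I would make this precise by remarking that $\A_m$ is (bottom-up) deterministic and that each non-leaf state has in-degree one in the rule set, so the induction yields uniqueness at every level simultaneously with existence.

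Concluding, since $q_k$ is the unique final state, $L(\A_m)=\{t_k\}$, and by the invariant $t_k$ has $N_k=m$ leaves, all of which are labelled $A$ (the only constant in $\mathcal{F}_1$ and the only symbol appearing at leaf positions in the rules). Together with the polynomial-time bound, this establishes the lemma.
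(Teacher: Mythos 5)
Your proof is correct. It rests on exactly the same recurrence as the paper's --- each bit $\alpha_{i+1}$ contributes either a doubling of the leaf count (rule $f(q_i,q_i)\to q_{i+1}$) or a doubling plus one (rule $g(q_i,q_i,q_1)\to q_{i+1}$) --- but you organize the induction differently. The paper inducts on $k$ over the \emph{family} of automata, expressing $L(\A_m)$ in terms of $L(\A_{\lfloor m/2\rfloor})$ or $L(\A_{(m-1)/2})$; this implicitly identifies the restriction of $\A_m$ to the states $q_1,\ldots,q_{k-1}$ with the smaller automaton, a step the paper leaves tacit. You instead fix $\A_m$ once and induct on the state index $i$, with the invariant that exactly one term reaches $q_i$ and that its leaf count is the integer whose binary representation is the prefix $\alpha_1\ldots\alpha_i$. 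Your framing has two small advantages: it makes the uniqueness argument fully explicit (each $q_{i+1}$ is the right-hand side of exactly one rule, whose argument states are completely prescribed, so no freedom remains in the term --- the in-degree-one observation is the right one here, and is what the paper's ``exactly the terms of the form $f(t_1,t_2)$'' silently relies on), and it avoids reasoning across different automata. The two arguments are otherwise interchangeable, and your treatment of the polynomial-time claim matches the paper's.
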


\begin{proof}
The proof is by induction on $k$. If $k=1$, then $m=1=\alpha_1$ (since
$m\neq 0$). In this case $Q_1=F_1=\{q_1\}$ and $\Delta_1=\{A\to q_1\}$;
therefore $L(\A_1)=\{A\}$ and the lemma result holds.   

Now assume that the lemma is true for a fixed $k\geq 1$. Let $2^{k+1}\leq m
< 2^{k+2}$ and set $m=\alpha_1\ldots\alpha_k\alpha_{k+1}$, the binary
representation of $m$. Two cases may arise:
\begin{itemize}
\item $\alpha_{k+1}=0$: In this case, by construction, the terms accepted by 
$\A_m$ are exactly the terms of the form $f(t_1,t_2)$, with 
$t_1\to^*_{\A_m} q_{k-1}$ and $t_2\to^*_{\A_m} q_{k-1}$. They correspond to the terms
$f(t_1,t_2)$, with $t_1,t_2\in L(\A_{\frac{m}{2}})$. By induction hypothesis, 
$L(\A_{\frac{m}{2}})$ is a singleton containing a unique term with
$\frac{m}{2}$ leaves, all labelled by $A$. It follows that $L(\A_m)$ accepts
a unique term with $2.\frac{m}{2}=m$ leaves, all labelled by $A$. 
\item  $\alpha_{k+1}=1$: Similarly, the terms accepted by 
$\A_m$ are exactly the terms of the form $g(t_1,t_2,A)$, with $t_1,t_2\in
L(\A_{\frac{m-1}{2}})$. By induction, it follows that $L(\A_m)$ accepts a
unique term with $1+2\cdot\frac{m-1}{2}=m$ leaves, all labelled by~$A$.
\end{itemize}
Therefore, the lemma result holds also for $k+1$, which concludes the proof. 
\end{proof}

Note that since $m_G\leq |V|^{|V|-1}$ 
the binary encoding of $m_G$ is of the size polynomial in $|V|$.
By Lemma~\ref{lemma:mG}, $m_G$ can be computed in polynomial time and $k$ is
polynomial in $|V|$. Therefore, the construction of $\A_{m_G}$ can be done in
polynomial time in $|V|$, proving the following lemma.

\begin{lemma}
Let $G$ be a directed graph satisfying $m_G\neq 0$. 
The tree automaton $\A_{m_G}$ can be computed in polynomial time.
\end{lemma}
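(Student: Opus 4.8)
The plan is to chain together the three facts already assembled in the preceding lemmas. The statement to prove is that, for a directed graph $G$ with $m_G \neq 0$, the tree automaton $\A_{m_G}$ is constructible in polynomial time in the size of $G$. The key insight is that constructing $\A_{m_G}$ requires only the \emph{binary representation} of $m_G$, and Construction~1 together with Lemma~\ref{lm:am} tells us this construction costs polynomial time in $k$, the \emph{bit-length} of $m_G$. So everything reduces to controlling two quantities: how large $m_G$ can be, and how quickly it can be obtained.

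First I would bound $k$, the number of bits of $m_G$. Since $m_G$ counts paths of length $|V|-1$ in $G$, and any such path is determined by its sequence of $|V|$ vertices, we have the crude bound $m_G \le |V|^{|V|-1}$. Taking logarithms, $k = \lfloor \log_2 m_G\rfloor + 1 \le (|V|-1)\log_2 |V| + 1$, which is polynomial in $|V|$, hence polynomial in the size of $G$. This is the step the remark preceding the lemma is flagging, and it is essentially the crux: without a singly-exponential bound on $m_G$ the binary encoding would be too long and the whole reduction would collapse.

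Next I would invoke Lemma~\ref{lemma:mG}, which guarantees that $m_G$ itself can be computed in polynomial time in the size of $G$; in particular this yields its binary representation $\alpha_1\ldots\alpha_k$ in polynomial time. Feeding this representation into Construction~1 produces $\A_{m_G}$, whose state set has size $k+1$ and whose rule set $\Delta_1$ has size $O(k)$, so writing it down costs time polynomial in $k$ by Lemma~\ref{lm:am}. Composing ``polynomial in $|V|$'' (to get $k$ and the bits of $m_G$) with ``polynomial in $k$'' (to write down $\A_{m_G}$) gives an overall polynomial-time procedure in the size of $G$, which is exactly the claim.

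I do not expect any genuine obstacle here, as the lemma is a bookkeeping step that merely certifies the composition of earlier results runs in polynomial time. The only point demanding a moment's care is the $m_G \le |V|^{|V|-1}$ estimate and its consequence that $k$ stays polynomial; everything else is immediate from the cited lemmas. Since $m_G \neq 0$ is assumed, $\A_{m_G}$ is well-defined (Construction~1 requires a strictly positive integer), so there is no degenerate case to exclude.
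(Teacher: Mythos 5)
Your proof is correct and follows exactly the paper's own argument: bound $m_G\leq |V|^{|V|-1}$ so that the bit-length $k$ is polynomial in $|V|$, compute $m_G$ via Lemma~\ref{lemma:mG}, and then apply the construction of $\A_m$ whose cost is polynomial in $k$ by Lemma~\ref{lm:am}. No differences worth noting.
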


The next construction is dedicated to a tree automaton $\P_{G}$ accepting 
terms encoding paths of length $|V|-1$. 

\begin{construction}
Let $G=(V,E)$ be a non empty directed graph and let $n=|V|-1$. Let
$\mathcal{F}_2=\{h\} \cup \{A_v\mid v\in V\}$, where $h$ is of arity~$2$ and
the $A_v$'s are constants. Let $\P_{G}=(Q_2,\Delta_2,F_2)$ be
the tree automaton over $\mathcal{F}_2)$, where
\item $Q_2=\{q_w^i \mid 0\leq i \leq n-1,\ w\in V\}$,
$F_2=\{q_{w}^{n-1}\mid w\in V\}$, and 
\begin{align*}
\Delta_2=\{A_w\to q_w^0 \mid w\in V\}
\cup \{h(q_v^0,q_w^i)\to q_v^{i+1}\mid 1\leq i \leq n-2,\ (w,v)\in V\} \ . \\
\end{align*}
\end{construction}

Note that the construction of $\P_G$ can be done in polynomial time. 
For a given graph $G=(V,E)$ and a given finite set $Q$, 
 an {\it $h$-term} on $Q$ is a term
either of the form $\beta_0$ or
$h(\beta_k,h(\beta_{k-1},h(\ldots,h({\beta_1},{\beta_0})\ldots)))$, where
$\beta_i\in \{A_v\mid v\in V\}\cup Q$. Such an $h$-term is denoted
$[\beta_k\beta_{k-1}\ldots\beta_0]_Q$. If $Q$ is clear from the context, the
index $Q$ is omitted.


\begin{lemma}\label{lm:p}
Let $G=(V,E)$ be a non empty directed graph. A term $t$ is
accepted by $\P_G$ iff there exists a path
$(w_0,w_1)(w_1,w_2)\ldots(w_{n-2},w_{n-1})$  in $G$ such that 
$t=[A_{w_{n-1}}A_{w_{n-2}}\ldots A_{w_1}A_{w_0}]_{Q_2}$.
\end{lemma}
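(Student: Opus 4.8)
The plan is to prove a single invariant describing every reachable state of $\P_G$ and then specialise it to the final states $F_2$. Concretely, I would establish that for all $0\leq i\leq n-1$, all $w\in V$, and all terms $t$,
\[
t\to^*_{\P_G} q_w^i \iff t=[A_w A_{u_{i-1}}\ldots A_{u_0}]_{Q_2}\ \text{for some path}\ (u_0,u_1)(u_1,u_2)\ldots(u_{i-1},w)\ \text{in}\ G .
\]
(For $i=0$ the right-hand side degenerates to $t=[A_w]_{Q_2}=A_w$ together with the empty path ending at $w$.) Granting this, the lemma is immediate: since $F_2=\{q_w^{n-1}\mid w\in V\}$, a term $t$ is accepted by $\P_G$ iff $t\to^*_{\P_G} q_w^{n-1}$ for some $w\in V$, which by the invariant taken at $i=n-1$ and $w=w_{n-1}$ is exactly the claimed characterisation.

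I would prove the invariant by induction on $i$. The base case $i=0$ uses that the unique rule of $\Delta_2$ whose right-hand side is $q_w^0$ is $A_w\to q_w^0$; hence $t\to^*_{\P_G} q_w^0$ holds iff $t=A_w$. For the inductive step I exploit the rigidity of $\Delta_2$: the only rules producing a state of superscript $i+1$ have the shape $h(q_w^0,q_u^i)\to q_w^{i+1}$ with $(u,w)\in E$. Thus $t\to^*_{\P_G} q_w^{i+1}$ forces $t=h(t_1,t_2)$ with $t_1\to^*_{\P_G} q_w^0$ and $t_2\to^*_{\P_G} q_u^i$ for some $u$ with $(u,w)\in E$. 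The base case gives $t_1=A_w$, and the induction hypothesis applied to $t_2$ yields $t_2=[A_u A_{u_{i-1}}\ldots A_{u_0}]_{Q_2}$ for a path of length $i$ ending at $u$. Appending the edge $(u,w)$ produces a path of length $i+1$ ending at $w$, while $t=h(A_w,t_2)=[A_w A_u A_{u_{i-1}}\ldots A_{u_0}]_{Q_2}$ is the corresponding $h$-term; reading each of these equivalences in the reverse direction gives the converse implication.

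I do not expect a genuine mathematical obstacle; the delicate part is purely the bookkeeping that matches the nested $h$-structure of $t$ to the sequence of edges of the path. I must verify in both directions that the superscript $i$ of $q_w^i$ tracks exactly the number of occurrences of $h$ (equivalently the length of the path), that the subscript $w$ of the reached state coincides with the label of the outermost constant $A_w$ of the $h$-term, and that each rule instance's side condition $(u,w)\in E$ is precisely the consecutivity condition between successive vertices of the path. Making these three correspondences explicit is what carries the argument.
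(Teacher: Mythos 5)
Your proof is correct and follows essentially the same route as the paper's: both directions rest on the observation that each state $q_w^i$ is produced by a unique shape of rule, together with an induction matching the superscript $i$ to the length of the path. The only difference is presentational --- you package both implications into a single invariant proved by one induction on $i$, whereas the paper argues the two directions separately (peeling the term top-down for the ``only if'' part and building $t_i\to^* q_{w_i}^i$ bottom-up for the converse).
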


\begin{proof}
If $t$ is accepted by $\P_G$, then there exists $w_{n-1}\in V$ such that 
 $t\to^*q_{w_{n-1}}^{n-1}$. Looking right-hand sides of the transitions, it
follows that there exists $w_{n-2}\in V$ such that 
$t\to^*h(q_{w_{n-1}}^0,q_{w_{n-2}}^{n-2})\to q_{w_{n-1}}^{n-1}$. The unique
rule with right-hand side $q_{w_{n-1}}^0$ is $A_{w_{n-1}}\to q_{w_{n-1}}^0$.
Therefore $t$ is of the form $t=h(A_{w_{n-1}},t^\prime)$ with 
$t^\prime\to^*q_{w_{n-2}}^{n-2}$ and $(w_{n-2},w_{n-1})\in E$. By a direct
induction on $n$, one has
$t=[A_{w_{n-1}}A_{w_{n-2}}\ldots A_{w_1}A_{w_0}],$
where $(w_0,w_1)(w_1,w_2)\ldots(w_{n-2},w_{n-1})$ is a path in $G$.

Conversely, assume that
$t=[A_{w_{n-1}}A_{w_{n-2}}\ldots A_{w_1}A_{w_0}]$
and that the sequence $(w_0,w_1)(w_1,w_2)\ldots(w_{n-2},w_{n-1})$ is a path in $G$.
For each $1\leq i \leq n-1$, let 
$t_i=[A_{w_{i}}A_{w_{i-1}}\ldots A_{w_1} A_{w_0}].$
One has $t_1=h(A_{w_1},A_{w_0})$, with $(w_0,w_1)\in E$. Therefore
$t_1\to^* q_{w_1}^1$. By a direct induction, one has
$t_i\to^*q_{w_i}^i$. Consequently $t_{n-1}\to^* q_{w_{n-1}}^{n-1}$.
It follows that $t_{n-1}$ is accepted by $\P_G$. It suffices to note that
$t_{n-1}=t$ to conclude the proof.
\end{proof}

The next construction designs a tree automaton $\C_{G}$ 
accepting 
terms of the form $[A_{w_k}A_{w_{k-1}}\ldots A_{w_1}A_{w_0}],$
where $k\geq 1$ and there exist $j\neq i$ such that 
$w_i=w_j$.  

\begin{construction}
Let $G=(V,E)$ be a non empty directed graph. Let
$\mathcal{F}_2=\{h\} \cup \{A_v\mid v\in V\}$, where $h$ has arity~$2$ and
the $A_v$'s are constants. 
Without loss of generality we assume that 
$0,1,f\notin V$. Let $\C_{G}=(Q_3,\Delta_3,F_2)$ be
the tree automaton over $\mathcal{F}_2$, where

$Q_3=\{p_w,p_w^\prime\mid w\in V\}\cup \{p_0,p_1,p_f\}$,
$F_3=\{p_f\}$, and 
\begin{align*}
\Delta_3=&\{A_w\to p_0,A_w\to p_w,A_w\to p_w^\prime\mid w\in
V\}\\&\cup\{h(p_0,p_0)\to p_1, h(p_0,p_1)\to p_1,h(p_w,p_0)\to
p_w^\prime\}\\ &\cup\{  h(p_w,p_w^\prime)\to p_f,
h(p_0,p_w^\prime)\to p_w^\prime, h(p_0,p_f)\to p_f,h(p_w,p_1)\to p_w^\prime \}.
\end{align*}
\end{construction}


\begin{lemma}\label{lm:p1}
Let $G=(V,E)$ be a non empty directed graph. For any term $t$, one has
$t \to^*_{\C_G} p_1$ iff $t=[A_{w_k}A_{w_{k-1}}\ldots A_{w_1}A_{w_0}]_{Q_3}$ 
with $k\geq 1$.
\end{lemma}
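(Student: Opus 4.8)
The plan is to characterize reductions to $p_1$ through a tight analysis of which rules of $\Delta_3$ can produce the states $p_0$ and $p_1$, and to observe that these rules form an autonomous fragment of $\C_G$. The crucial structural remark is that the only rule with right-hand side $p_0$ is a leaf rule $A_w \to p_0$, while the only rules with right-hand side $p_1$ are $h(p_0,p_0)\to p_1$ and $h(p_0,p_1)\to p_1$. Since none of these three rules ever mentions a state of the form $p_w$, $p_w^\prime$, or $p_f$, the computation of $\C_G$ restricted to the states $\{p_0,p_1\}$ never interacts with the rest of the automaton. This reduces the lemma to understanding a very small list-processing subautomaton.

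For the forward direction ($\Rightarrow$), I would proceed by induction on the size (or depth) of $t$. Assume $t \to^*_{\C_G} p_1$. Inspecting the right-hand sides, the last rule applied must be either $h(p_0,p_0)\to p_1$ or $h(p_0,p_1)\to p_1$, so $t = h(t_1,t_2)$ with $t_1 \to^* p_0$ and $t_2 \to^* p_0$ or $t_2 \to^* p_1$. Because $A_w \to p_0$ is the unique rule producing $p_0$, the subterm $t_1$ must be a constant $A_{w_k}$ for some $w_k \in V$. In the first case $t_2 = A_{w_{k-1}}$ as well, so $t = [A_{w_k} A_{w_{k-1}}]_{Q_3}$ with $k = 1$; in the second case the induction hypothesis applies to $t_2$, yielding $t_2 = [A_{w_{k-1}}\ldots A_{w_0}]_{Q_3}$, and prepending $A_{w_k}$ gives $t = [A_{w_k}\ldots A_{w_0}]_{Q_3}$ with $k \geq 1$, as required.

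For the converse direction ($\Leftarrow$), I would argue by induction on $k$. For the base case $k = 1$, the term is $t = h(A_{w_1},A_{w_0})$, and the reduction $A_{w_1}\to p_0$, $A_{w_0}\to p_0$, $h(p_0,p_0)\to p_1$ gives $t \to^* p_1$. For the inductive step, write $t = h(A_{w_{k+1}}, [A_{w_k}\ldots A_{w_0}]_{Q_3})$; the inner $h$-term reduces to $p_1$ by the induction hypothesis, $A_{w_{k+1}}\to p_0$, and finally $h(p_0,p_1)\to p_1$ yields $t \to^* p_1$. I do not expect a genuine obstacle here: the main care is simply in treating the base case correctly, so that $k \geq 1$ (at least one $h$-node, excluding the bare constant $[A_{w_0}]_{Q_3}$ which reduces only to $p_0$) is respected, and in noting that the autonomy of the $\{p_0,p_1\}$-fragment legitimises ignoring all other rules throughout both inductions.
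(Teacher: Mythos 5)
Your proof is correct and takes essentially the same approach as the paper's: inspection of which rules can produce $p_0$ and $p_1$, combined with a structural induction in each direction. You are in fact slightly more careful than the paper, whose converse analysis only mentions the last rule $h(p_0,p_1)\to p_1$ and glosses over the base case via $h(p_0,p_0)\to p_1$, which you treat explicitly.
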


\begin{proof}
If $t=h(A_{w_k},h(A_{w_{k-1}},h(\ldots,h(A_{w_1},A_{w_0})\ldots))),$ 
then by a direct induction on $k$, and using the transitions $A_w\to p_0$ and
$h(p_0,p_1)\to p_1$, one has $t\to^* p_1$. 

Now, if $t\to^* p_1$, then the last transition used to reduce $t$ is
$h(p_0,p_1)\to p_1$. Therefore there exists $w\in V$ such that 
 $t=h(A_w,t^\prime)$ with $t^\prime\to^* p_1$. By a direct induction 
on the depth of $t$, one can conclude the proof. 
\end{proof}




\begin{lemma}\label{lm:pw}
Let $G=(V,E)$ be a non empty directed graph. For any term $t$, one has
$t \to^*_{\C_G} p_w^\prime$ iff $t$ is  of the form
$t=[A_{w_k}A_{w_{k-1}}\ldots A_{w_1}A_{w_0}]_{Q_3},$
where $k\geq 1$ and at least one of the $w_i$ is equal to $w$.
\end{lemma}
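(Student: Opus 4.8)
The plan is to prove both implications by reasoning directly on the rules of $\C_G$ whose right-hand side is $p_w'$, leaning on the already-established characterizations of the auxiliary states. First I would record two easy observations that I will reuse constantly: a term reduces to $p_0$ if and only if it is a single constant $A_v$ (the only rule with right-hand side $p_0$ is $A_v\to p_0$), and, by Lemma~\ref{lm:p1}, a term reduces to $p_1$ if and only if it is an $h$-term $[A_{w_k}\ldots A_{w_0}]$ with $k\geq 1$. Both facts serve to pin down the shape of the subterms sitting under the $p_w'$-producing rules.

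For the forward direction I would induct on the depth of $t$ (equivalently on the length of a reduction $t\to^*_{\C_G} p_w'$) and examine the last rule applied. The rules producing $p_w'$ are $A_w\to p_w'$, $h(p_w,p_0)\to p_w'$, $h(p_w,p_1)\to p_w'$ and $h(p_0,p_w')\to p_w'$. In each of them the left argument of the top $h$ reduces only to $p_w$ or $p_0$, both of which are reachable solely from a constant; this forces every left child along the spine to be a single leaf, so $t$ is automatically a right-combed $h$-term $[A_{w_k}\ldots A_{w_0}]$. It then remains to exhibit an index with $w_i=w$: the rules $h(p_w,p_0)\to p_w'$ and $h(p_w,p_1)\to p_w'$ place $A_w$ at the current top (the observations on $p_0$ and $p_1$ fixing the shape below), whereas $h(p_0,p_w')\to p_w'$ merely prepends an arbitrary constant above a strictly shorter term which, by the induction hypothesis, already contains an occurrence of $w$.

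For the converse I would induct on $k$ and split on whether the topmost symbol $A_{w_k}$ equals $A_w$. If $w_k=w$, I send $A_{w_k}$ to $p_w$ and the remainder $[A_{w_{k-1}}\ldots A_{w_0}]$ either to $p_0$ (when $k=1$) or to $p_1$ (when $k\geq 2$, via Lemma~\ref{lm:p1}), closing with $h(p_w,p_0)\to p_w'$ or $h(p_w,p_1)\to p_w'$ respectively. If instead $w_k\neq w$, then some $w_j=w$ with $j\leq k-1$, so the strict subterm $[A_{w_{k-1}}\ldots A_{w_0}]$ still carries an occurrence of $w$; the induction hypothesis yields a reduction of it to $p_w'$, and sending $A_{w_k}$ to $p_0$ lets me conclude with $h(p_0,p_w')\to p_w'$.

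I expect the only delicate point to be the forward direction's bookkeeping: one must verify that $p_0$ and $p_1$ are reachable \emph{exclusively} by leaves and by $h$-terms of length $\geq 1$ respectively, since it is precisely this that excludes malformed, non-right-combed terms and localizes where the distinguished occurrence of $w$ is created. The base of the forward induction also deserves care, as the rule $A_w\to p_w'$ already sends the single leaf $A_w$ to $p_w'$; I would treat this as the seed occurrence of $w$ that the rule $h(p_0,p_w')\to p_w'$ then carries up the spine, with the genuine $h$-terms of length $k\geq 1$ arising once at least one $h$-rule has fired.
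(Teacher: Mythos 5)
Your proof is correct and follows essentially the same route as the paper's: the hard direction is an induction on the depth of $t$ with a case analysis on the four rules whose right-hand side is $p_w'$, using Lemma~\ref{lm:p1} to pin down the subterm reaching $p_1$, and the easy direction exhibits the reduction explicitly via $h(p_w,p_0)$, $h(p_w,p_1)$ and propagation by $h(p_0,p_w')$. The only (immaterial) difference is that for the easy direction the paper splits on the position $i$ of the occurrence of $w$ counted from the bottom ($i=0$, $i=1$, $i\geq 2$) rather than inducting on $k$ from the top, and you rightly flag the same subtlety the paper glosses over, namely that the rule $A_w\to p_w'$ makes the single leaf $A_w$ (the $k=0$ case) reach $p_w'$ even though the statement requires $k\geq 1$.
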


\begin{proof}
Let $t=[A_{w_k}A_{w_{k-1}}\ldots A_{w_1}A_{w_0}]_{Q_3}$ be a term such that
$w_i=w$, with $i\leq k$. If $i=0$, then $t\to [A_{w_k}A_{w_{k-1}}\ldots
  A_{w_1}p_w^\prime]\to^*p_w^\prime$ since $A_{w_0}=A_w\to p_w^\prime$. If 
$i=1$, then $t\to
[A_{w_k}A_{w_{k-1}}\ldots A_{w_1}p_0]\to^*[A_{w_k}A_{w_{k-1}}\ldots
    A_{w_2}p_w^\prime]$, using the transition $h(p_w,p_0)\to
p_w^\prime$. 
Now if $i\geq 2$, then, by Lemma~\ref{lm:p1}, one
  has $t\to^* [A_{w_k}A_{w_{k-1}}\ldots A_{w_i}p_1]$. Therefore
  $t\to^*[A_{w_k} A_{w_{k-1}}\ldots A_{w_{i+1}}p_wp_1]$. 
Since $[p_wp_1]\to
  p_w^\prime$, $t\to^*[A_{w_k} A_{w_{k-1}}\ldots
    A_{w_{i+1}}p_w^\prime]\to^*p_w^\prime$.

Conversely, if $t\to^* p_w^\prime$, we prove by induction on the depth
of $t$ that $t=[A_{w_k}A_{w_{k-1}}\ldots A_{w_1}A_{w_0}]$ with at least one
$i$ such that $w_i=w$. Assume now that the depth of $t$ is $n$. 
The four transitions having $p_w^\prime$ as right-hand
side are $A_w\to p_w^\prime$, $h(p_w,p_0)\to p_w^\prime$,
$h(p_0,p_w^\prime)\to p_w^\prime$ and $h(p_w,p_1)\to p_w^\prime$. If the
last transition used to reduce $t$ is $A_w\to p_w^\prime$, then $t=A_w$; $t$
is of the expected form. If the last transition used to reduce $t$ is
$h(p_w,p_1)\to p_w^\prime$, then $t=h(A_w,t^\prime)$. Using
Lemma~\ref{lm:p1}, $t$ is of the expected form. If the last transition used
to reduce $t$ is $h(p_w,p_0)\to p_w^\prime$, then there exists
$A_{w^\prime}$ such that $t=h(A_w,A_{w^\prime})$; $t$ is of the expected
form. If the last transition used to reduce $t$ is $h(p_0,p_w^\prime)\to
p_w^\prime$, then there exists $w^\prime$ and $t^\prime$ such that
$t=h(A_{w^\prime},t^\prime)$ and $t^\prime\to^* p_{w^\prime}$. By induction
hypothesis on $t^\prime$, $t$ is of the expected form, concluding the
induction and proving the lemma.
\end{proof}

\begin{lemma}\label{lm:C}
Let $G=(V,E)$ be a non empty directed graph. A term $t$ is
accepted by $\C_G$ iff it is of the form
$t=[A_{w_k}A_{w_{k-1}}\ldots A_{w_1}A_{w_0}]_{Q_3},$
where $k\geq 1$ and there exist $j\neq i$ such that 
$w_i=w_j$.  
\end{lemma}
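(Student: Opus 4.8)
The plan is to unfold the definition of acceptance: since $F_3=\{p_f\}$, a term $t$ is accepted by $\C_G$ exactly when $t\to^*_{\C_G} p_f$. I would therefore characterise the terms reducing to $p_f$ and prove the two implications separately, leaning on Lemma~\ref{lm:p1} (for $p_1$) and especially Lemma~\ref{lm:pw} (for $p_w^\prime$), which already encapsulate the hard structural content; the present lemma is essentially a layer wrapped around them.

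For the ``if'' direction, suppose $t=[A_{w_k}\ldots A_{w_0}]_{Q_3}$ with $k\ge 1$ and $w_i=w_j=:w$ for some $i<j$. First I would focus on the subterm $s_j=[A_{w_j}\ldots A_{w_0}]$. Its top symbol $A_{w_j}=A_w$ reduces by $A_w\to p_w$, while the part below it, namely $[A_{w_{j-1}}\ldots A_{w_0}]$, contains the occurrence $w_i=w$ at a position $i\le j-1$. When $j\ge 2$ this part is an $h$-term of length at least two, so by Lemma~\ref{lm:pw} it reduces to $p_w^\prime$; when $j=1$ (hence $i=0$) the part below is the single constant $A_w$, which reduces to $p_w^\prime$ directly via $A_w\to p_w^\prime$. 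In both cases $s_j=h(A_{w_j},[A_{w_{j-1}}\ldots A_{w_0}])\to^* h(p_w,p_w^\prime)\to p_f$. It then remains to climb back up through the outer layers: using $A_{w_m}\to p_0$ together with $h(p_0,p_f)\to p_f$ for $m=j+1,\ldots,k$ yields $t\to^* p_f$, so $t$ is accepted.

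For the ``only if'' direction I would argue by induction on the depth of $t$. No constant reduces to $p_f$, so any term reducing to $p_f$ must be of the form $h(t_1,t_2)$, and the last rule applied is either $h(p_w,p_w^\prime)\to p_f$ or $h(p_0,p_f)\to p_f$. In the first case, the only rule with right-hand side $p_w$ is $A_w\to p_w$, forcing $t_1=A_w$; by Lemma~\ref{lm:pw}, $t_2$ is an $h$-term of length at least two containing an occurrence of $w$, so $t=h(A_w,t_2)$ is an $h$-term whose top vertex $w$ is repeated inside $t_2$. In the second case, the only rules with right-hand side $p_0$ are the $A_{w^\prime}\to p_0$, forcing $t_1=A_{w^\prime}$, while $t_2\to^* p_f$ is a strictly smaller term; by the induction hypothesis $t_2$ is an $h$-term of the required form, and prepending $A_{w^\prime}$ preserves both the $h$-term shape and the repeated vertex. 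This closes the induction and proves the lemma.

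Routine bookkeeping aside, the only delicate point is the boundary case $j=1$ in the ``if'' direction, where the part below the repeated occurrence is a bare constant rather than a genuine $h$-term, so Lemma~\ref{lm:pw} does not apply and one must appeal directly to $A_w\to p_w^\prime$. Everything else amounts to reading off the admissible right-hand sides of the rules and invoking Lemmas~\ref{lm:p1} and~\ref{lm:pw}, which carry the real weight of the argument.
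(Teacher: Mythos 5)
Your proof is correct and follows essentially the same route as the paper's: the forward direction assembles the reduction to $p_f$ around the two repeated occurrences (you invoke Lemma~\ref{lm:pw} on the part below the outer occurrence where the paper re-derives the propagation of $p_w^\prime$ by hand, a cosmetic difference), and the converse is the same induction on depth with a case split on the last rule applied. The one point to watch is that in the converse, case $h(p_w,p_w^\prime)\to p_f$, the subterm $t_2$ may also be the bare constant $A_w$ (via $A_w\to p_w^\prime$), yielding $t=h(A_w,A_w)$ --- still of the required form, and precisely the depth-2 base case the paper treats explicitly; Lemma~\ref{lm:pw} as stated overlooks this constant case, so a one-line remark covering it would make your induction airtight.
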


\begin{proof}
Assume first
that $t=[A_{w_k}A_{w_{k-1}}\ldots A_{w_1}A_{w_0}]_{Q_3}$,
with $k\geq 2$ and there exist $j\neq i$ such that $w_i=w_j$.
If $j\geq 2$, one has
$$ t\to^*[A_{w_k}A_{w_{k-1}}\ldots A_{w_j}p_1] \to [A_{w_k}A_{w_{k-1}}\ldots
  p_{w_j}p_1] \to [A_{w_k}A_{w_{k-1}}\ldots A_{w_{j+1}}p_{w_j}^\prime].$$
If $j=1$, then $t\to [A_{w_k}A_{w_{k-1}}\ldots A_{w_1}p_{w_0}^\prime]
=[A_{w_k}A_{w_{k-1}}\ldots A_{w_{j+1}}p_{w_j}^\prime]$. If $j=0$, then $t\to
[A_{w_k}A_{w_{k-1}}\ldots A_{w_{1}}p_{w_0}^\prime]
=[A_{w_k}A_{w_{k-1}}\ldots A_{w_{j+1}}p_{w_j}^\prime]$. In every case one
has $t\to^*[A_{w_k}A_{w_{k-1}}\ldots A_{w_{j+1}}p_{w_j}^\prime]$. Moreover $
[A_{w_k}A_{w_{k-1}}\ldots A_{w_{j+1}}p_{w_j}^\prime]
\to^*[A_{w_k}A_{w_{k-1}}\ldots A_{w_{i}}p_{w_j}]$. Since $w_i=w_j$,
$$[A_{w_k}A_{w_{k-1}}\ldots A_{w_{i}}p_{w_j}]\to [A_{w_k}A_{w_{k-1}}\ldots
  A_{w_{i+1}}p_f].$$ It follows that $t$ is accepted by $\C_G$.

Conversely, assume now that $t\in L(\C_G)$. We prove by induction on
the depth of $t$ that it is of the form
$t=[A_{w_k}A_{w_{k-1}}\ldots A_{w_1}A_{w_0}],$ with $k\geq 2$ and such
that there exists $j\neq i$ satisfying $w_i=w_j$.

No constant is accepted by $\C_G$. If $t\in L(\C_G)$ has depth $2$, then
$t\to^* p_f$. The last transition used to reduce $t$ cannot be $h(p_0,p_f)\to
p_f$; otherwise $t$ would have a depth strictly greater than~$2$. It follows that there
exists $w$ such that $t\to h(p_w,p_w^\prime)$. Consequently, $t\to
h(A_w,p_w^\prime)$ since the unique transition having $p_w$ as right hand
side is $A_w\to p_w$. Now, since $t$ has depth~2, the unique possibility is
that $t=f(A_w,A_w)$. The property is therefore true for term of depth~2. Now
let $t$ be a term of depth $k-1$ belonging to $L(\C_G)$. There exists a
successful run $\rho$ such that $t\to^*_\rho p_f$. Therefore, either
$t\to_\rho^*h(p_0,p_f)$ or there exists $w_{k}$ such that
$t\to_\rho^*h(p_{w_{k}},p_{w_k}^\prime)$.
\begin{itemize}
\item If  $t\to_\rho^*h(p_0,p_f)$, then there exists $w_k$ such that 
$t=h(A_{w_k},t^\prime)$, with $t^\prime\in L(\C_G)$, and $t^\prime$ has depth~$k-1$. 
By induction on the depth, $t$ has the wanted form. 

\item If $t\to_\rho^* h(p_{w_k},p_{w_k}^\prime)$, then
  $t=h(A_{w_k},t^\prime)$ and $t^\prime\to^* w_k^\prime$. Using
  Lemma~\ref{lm:pw}, $t^\prime=[A_{w_{k-1}}A_{w_{k-2}}\ldots
    A_{w_1}A_{w_0}],$ where at least one of the $w_{i}$ ($i\leq k-1$) is
  equal to $w_k$, proving the induction and concluding the proof.
\end{itemize}
\end{proof}

\begin{lemma}\label{lm:bg}
Given a directed non empty graph $G=(V,E)$, one can compute in time polynomial
 in the size of $G$ a tree automaton $\B_G$ with a unique final state,
accepting exactly the set of terms of the form
$t=[A_{w_{|V|-2}}A_{w_{|V|-1}}\ldots A_{w_1}A_{w_0}]_\emptyset$,
such that $(w_0,w_1)\ldots (w_{|V|-1},w_{|V|-2})$ is a non Hamiltonian path
of $G$. 
\end{lemma}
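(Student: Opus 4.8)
The plan is to realise $\B_G$ as an automaton accepting the intersection $L(\P_G)\cap L(\C_G)$, and then to collapse its accepting states into a single one. The intuition is that $\P_G$ already isolates the terms encoding paths of the right length, while $\C_G$ isolates the $h$-terms in which some vertex is repeated; since a path of length $|V|-1$ is Hamiltonian exactly when no vertex repeats (Lemma~\ref{lm:1}), a length-$(|V|-1)$ path that does repeat a vertex is precisely a non-Hamiltonian one. Thus the intersection of the two languages is exactly the set of terms the lemma asks for.

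First I would form the usual product automaton $\P_G\times\C_G$ over the common alphabet $\mathcal{F}_2$: its states are the pairs $(q,p)\in Q_2\times Q_3$, a constant rule $A_w\to(q,p)$ is kept whenever both $A_w\to q\in\Delta_2$ and $A_w\to p\in\Delta_3$, and a rule $h((q_1,p_1),(q_2,p_2))\to(q,p)$ is kept whenever $h(q_1,q_2)\to q\in\Delta_2$ and $h(p_1,p_2)\to p\in\Delta_3$, with final set $F_2\times F_3$. A routine induction on the structure of a term shows that $t\to^*_{\P_G\times\C_G}(q,p)$ iff simultaneously $t\to^*_{\P_G}q$ and $t\to^*_{\C_G}p$, so the product accepts exactly $L(\P_G)\cap L(\C_G)$. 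Combining the characterisation of $L(\P_G)$ from Lemma~\ref{lm:p} with that of $L(\C_G)$ from Lemma~\ref{lm:C}, and invoking Lemma~\ref{lm:1}, identifies this language with the encodings of non-Hamiltonian paths of length $|V|-1$.

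To obtain the required unique final state I would \emph{not} redirect the existing rules, because the accepting state $p_f$ of $\C_G$ occurs on a left-hand side (in $h(p_0,p_f)\to p_f$), so the product-final states may still be needed inside larger derivations. Instead I would add a fresh state $q_{\mathrm{fin}}$ and, for every product rule whose right-hand side lies in $F_2\times F_3=\{(q_w^{n-1},p_f)\mid w\in V\}$, add a copy of that rule with right-hand side $q_{\mathrm{fin}}$, taking $\{q_{\mathrm{fin}}\}$ as the sole final set. Since $q_{\mathrm{fin}}$ never occurs on a left-hand side, a term reaches $q_{\mathrm{fin}}$ iff it reaches some $(q_w^{n-1},p_f)$, so the accepted language is unchanged while the final set is now a singleton.

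For the complexity bound I would simply count: $|Q_2|=O(|V|^2)$ and $|Q_3|=O(|V|)$, so the product has $O(|V|^3)$ states and a number of rules polynomial in $|V|$ and $|E|$; adding $q_{\mathrm{fin}}$ together with the duplicated rules costs only polynomially more. Hence $\B_G$ is constructed in time polynomial in the size of $G$. The one point demanding care, and the place I expect the only real friction, is the bookkeeping ensuring that the product genuinely computes the intersection together with the check that merging the final states neither enlarges nor shrinks the language, precisely because the old accepting state of $\C_G$ is reused internally; adding rather than redirecting rules is exactly what resolves this.
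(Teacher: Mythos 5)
Your proposal is correct and follows essentially the same route as the paper: a product of $\P_G$ and $\C_G$ to recognise $L(\P_G)\cap L(\C_G)$, followed by unifying the final states. Your explicit fresh-state construction (duplicating the rules that target $F_2\times F_3$ onto a new state $q_{\mathrm{fin}}$, rather than redirecting them, since $p_f$ occurs on a left-hand side) is precisely what the paper's appeal to ``classical $\varepsilon$-transition removal'' amounts to, just spelled out in more detail.
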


\begin{proof}
The automata $\C_G$ -- checking that a vertex is visited twice --
and $\P_G$ -- checking the length of the path -- can both be computed in polynomial time.
Therefore, using the classical product construction, one can compute a tree
automaton accepting $L(\C_G)\cap L(\P_G)$ in polynomial time. Transforming
this automaton into an automaton with a unique final state can also be done in
polynomial time using classical $\varepsilon$-transition removal, proving
the lemma. The obtained automaton is $\B_G$.
\end{proof}

We can now give the last construction to prove the main result. 

\begin{construction}
Set $\B_G=(Q,\Delta,\{q_f\})$. Without loss of generality, one can assume
that $q_f=q_1$ and that $Q\cap Q_1=\{q_1\}$. We consider the automaton
$\D_G=(Q_4,\Delta_4,F_4)$ over $\mathcal{F}_1\cup\mathcal{F}_2$ defined by:
$Q_4=Q\cup Q_1$, $F_4=\{q_k\}$ and
$\Delta_4=(\Delta\cup\Delta_1)\setminus\{A\to q_1\}$.
\end{construction}

\begin{lemma}\label{lm:fin}
The TAGED $(\D_G,\emptyset,\{(q_1,q_1)\})$ can be constructed in polynomial time
in the size of $G$. Moreover, it accepts the empty language iff there exists
a Hamiltonian path in $G$. 
\end{lemma}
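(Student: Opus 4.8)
The plan is to prove two separate claims: polynomial-time constructibility, which is routine, and correctness of the reduction, which is the substance. For constructibility I would simply note that $\B_G$ is polynomial by Lemma~\ref{lm:bg}, that $\A_{m_G}$ is polynomial by the preceding lemma (recall $m_G\leq |V|^{|V|-1}$, so its binary length $k$ is polynomial in $|V|$), and that assembling $\D_G$ only requires a disjoint union of state and rule sets, the deletion of the single rule $A\to q_1$, and the identification of $q_f$ with $q_1$; adjoining the one constraint $\{(q_1,q_1)\}$ costs nothing. For correctness I would first characterize the language of $\D_G$ \emph{as a plain automaton}, ignoring the constraint, and only then read off the effect of the disequality.

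The structural heart of the argument is the following observation. Since $Q\cap Q_1=\{q_1\}$ and the only rule of $\Delta_1$ with right-hand side $q_1$ is $A\to q_1$, once this rule is removed the only way a run of $\D_G$ can assign state $q_1$ to a position is through a computation of $\B_G$; hence every $q_1$-labelled position is the root of a subterm in $L(\B_G)$. Above such positions the run uses only the unchanged $f/g$-rules of $\Delta_1$ on the way to the final state $q_k$, so by Lemma~\ref{lm:am} the $f/g$-skeleton of any accepted term is forced to be that of the unique term of $L(\A_{m_G})$, whose $m_G$ leaves (the former $A$'s) become $m_G$ independent ``slots''. I would therefore establish that $L(\D_G)$ is exactly the set of terms obtained from this fixed skeleton by substituting into each of the $m_G$ slots, independently, an arbitrary element of $L(\B_G)$.

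With this in hand the constraint is easy to interpret. As the $q_1$-positions are precisely the $m_G$ slots, the reflexive disequality $(q_1,q_1)$ accepts a filled skeleton iff the $m_G$ subterms placed in the slots are pairwise distinct; since each slot ranges over all of $L(\B_G)$, such an injective filling exists iff $|L(\B_G)|\geq m_G$. I then convert this cardinality condition into Hamiltonicity: by Lemmas~\ref{lm:1}, \ref{lm:p} and~\ref{lm:C}, $L(\B_G)$ is exactly the set of length-$(|V|-1)$ paths of $G$ that repeat a vertex, i.e.\ the non-Hamiltonian ones, so $|L(\B_G)|=m_G-h_G$ where $h_G$ is the number of Hamiltonian paths. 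Because $0\leq |L(\B_G)|\leq m_G$, we have $|L(\B_G)|\geq m_G$ iff $h_G=0$, i.e.\ iff $G$ has no Hamiltonian path. Thus the TAGED is non-empty iff $G$ has no Hamiltonian path, equivalently its language is empty iff a Hamiltonian path exists, as required; the degenerate case $m_G=0$ (where $\A_{m_G}$ is undefined) forces the absence of any length-$(|V|-1)$ path and hence of any Hamiltonian path, and is dispatched separately in the proof of Theorem~\ref{thm:main}.

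I expect the main obstacle to be the rigorous characterization of $L(\D_G)$ in the second step, and in particular arguing cleanly that, after the deletion of $A\to q_1$, \emph{all} occurrences of $q_1$ sit at the slot positions and that the fillings there may be chosen fully independently, so that the disequality on $(q_1,q_1)$ genuinely reduces to pairwise distinctness across $m_G$ slots.
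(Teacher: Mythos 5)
Your proposal is correct and follows essentially the same route as the paper: characterize $L(\D_G)$ as the fixed skeleton of $L(\A_{m_G})$ with its $m_G$ slots filled independently by terms of $L(\B_G)$, observe that the reflexive disequality on $q_1$ forces pairwise distinct fillings so that non-emptiness is equivalent to $|L(\B_G)|\geq m_G$, and translate this cardinality comparison into the existence of a Hamiltonian path via Lemmas~\ref{lm:1} and~\ref{lm:bg}. Your treatment is in fact somewhat more careful than the paper's (explicitly justifying that all $q_1$-positions are slots after deleting $A\to q_1$, and flagging the degenerate case $m_G=0$), but the underlying argument is the same.
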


\begin{proof}
Using Lemma~\ref{lm:am}, the term accepted by $\D_G$ are those of the form
$C[t_1,\ldots,t_{m_G}]$, where $C[A,\ldots,A]$ is the unique term accepted
by $\A_{m_G}$ and each $t_i$ is accepted by $\B_G$. 
With the inequality constraint,  $(\D_G,\emptyset,\{(q_1,q_1)\})$ accepts an
empty language iff $|L(\B_G)|< m_G$. By Lemma~\ref{lm:bg}, $|L(\B_G)|$ is
the number of non Hamiltonian paths in $G$. Since $m_G$ is the number of
paths of length $|V|-1$ in $G$, using Lemma~\ref{lm:1},
 $L((\D_G,\emptyset,(q_1,q_1)))=\emptyset$ iff there exists a Hamiltonian
path of length $|V|-1$ in $G$.
\end{proof}

Theorem~\ref{thm:main} is a direct consequence of Lemma~\ref{lm:fin} and of
the polynomial time construction of $\D_G$.

\section{Conclusion}
In this paper we have proved that the emptiness problem for TAGED is NP-hard if
there is at least one negative constraint. It is known that the emptiness
problem for TAGED with only irreflexive disequality constraints is in
NEXPTIME~\cite{DBLP:journals/ijfcs/FiliotTT10}, and that it is NP-hard
-- by reduction of emptiness for DAG automata~\cite{charatonik1999automata}.
If there are only reflexive disequality constraints,
emptiness is known to be solvable in 3-EXPTIME~\cite{DBLP:journals/jar/CreusGG13}.
The gap between these bounds is large and deserves to be refined.

\bibliographystyle{plain}
\bibliography{tagedbib}

\end{document}